\newtheorem{theorem}{Theorem}
\newtheorem{lemma}{Lemma}
\newtheorem{corollary}{Corollary}
\theoremstyle{remark}
\newtheorem{remark}{Remark}
\theoremstyle{remark}
\newtheorem{step}{Step}
\definecolor{wjs}{RGB}{50,200,100}
\definecolor{lbj}{RGB}{200,50,100}
\newcommand{\e}[1]{\ifthenelse{\equal{#1}{1}}{\mathrm{e}}{\mathrm{e}^{#1}}}
\newcommand{\R}{\mathbb{R}}
\newcommand{\E}{\mathbb{E}}
\newcommand\iid{i.~i.~d.~}
\newcommand\nulls{\mathscr{N}_0}
\newcommand\T{\top}
\newcommand\ko{\bm{X}_{\textnormal{\tiny KO}}}
\newcommand\NB{\mathrm{NB}}
\newcommand\pr{\mathbb{P}}
\def\blfootnote{\gdef\@thefnmark{}\@footnotetext}
\begin{document}


\title{Familywise Error Rate Control via Knockoffs}

\author{Lucas Janson \and Weijie Su}
\date{}
\maketitle

{\centering
  \vspace*{-0.3cm} Department of Statistics, Stanford University, Stanford, CA 94305, USA\par\bigskip \date{May 2015}\par
}
\blfootnote{The authors contributed equally to this work.}

\begin{abstract}
We present a novel method for controlling the $k$-familywise
error rate ($k$-FWER) in the linear regression setting using the knockoffs
framework first introduced by Barber and Cand\`{e}s. Our
procedure, which we also refer to as knockoffs, can be applied with any
design matrix with at least as many observations as variables, and
does not require knowing the noise variance. Unlike other multiple
testing procedures which act directly on $p$-values, knockoffs is
specifically tailored to linear regression and implicitly accounts for the statistical
relationships between hypothesis tests of different coefficients. We
prove that knockoffs controls the $k$-FWER exactly in
finite samples and show in simulations that it provides superior power
to alternative procedures over a range of linear regression
problems. We also discuss extensions to controlling other Type I error
rates such as the false exceedance rate, and use it to identify
candidates for mutations conferring drug-resistance in HIV.
\end{abstract}

{\small
{\bf Keywords.}
$k$-familywise error rate; knockoffs; multiple testing; linear regression; Lasso; negative binomial distribution.
}

\section{Introduction}
\label{sec:introduction}
Multiple testing has received increasing attention
with the advent of fields like genetics, technology, and astronomy
which produce very high-dimensional datasets. The increasing number of
hypotheses being simultaneously tested has motivated extensive
research into procedures that maintain control of the familywise errors
that abound when each hypothesis is only tested individually. For
instance, the canonical criterion of the familywise error rate (FWER)
controls the probability of falsely rejecting any of the true null
hypotheses. A number of more modern landmark works have introduced new
Type I error rates that allow for higher power by relaxing the FWER, including the false discovery rate (FDR) \citep{BH1995},
the $k$-FWER \citep{hh1988,lehmann2005}, and the false discovery exceedance (FDX) \citep{genovese2004stochastic, van2004augmentation}. Each one has a different
interpretation, but all control an error rate defined over all
hypotheses being tested, so that conclusions can be drawn by
considering rejected hypotheses together.

Among multiple testing problems, some of most important deal with
finding relationships between variables. Such investigations are often
posed as a linear model
\[
\bm y = \bm X \bm\beta + \bm z,
\]
where $\bm X = [\bm X_1, \ldots, \bm X_p] \in \R^{n \times p}$ is a design matrix,
$\bm\beta \in \R^p$ is a signal vector of interest, and $\bm z \in \R^n$ is the error term. The hypotheses of interest are which variables $\beta_j$, after controlling for all other variables, contribute to
the model, or have nonzero coefficients. With the ability to encode
correlations between variables, linear models capture far more
real-life examples than sequence models. Examples abound particularly
in genetics, where one searches for relationships between parts of the
genome, often in the form of single nucleotide polymorphisms or
expression levels, and continuous variables such as health factors or
drug response. Unfortunately, due to the dependence among the
variables in the linear model, their respective tests do not in general
exhibit any of the simple dependence structures, such as independence
or positive dependence,
that are required for many of the most powerful existing procedures.

In this work we focus on controlling the $k$-FWER, the probability of making at least
$k$ false discoveries, in the context of 
linear models. Our method uses the framework of
knockoffs introduced by \citet{knockoff}. The idea of knockoffs is to
carefully construct artificial variables that serve as controls for
the original variables. Barber and Cand\`es show that these controls
are easy to construct and can be used to automatically account for
variable dependence to provide finite-sample FDR control
for general design matrices without knowledge of the noise
variance. Controlling the FDR can be highly
desirable in a high-power setting, but results can be hard to
interpret when few discoveries are made, as the realized false discovery
proportion may be highly variable. The $k$-FWER,
which in the case of $k=1$ reduces to the standard FWER, always has a
clear interpretation by explicitly bounding the
probability of $k$ or more false discoveries, making it a useful
criterion in all settings, as evidenced by its wide acceptance in the
scientific community. The $k$-FWER also provides a fundamental building
block to other Type I error rates, such as the FDX and Per Family
Error Rate (PFER), as we will discuss in
Section~\ref{sec:contr-other-error}. We leverage the attractive
features of the knockoffs framework to construct a novel procedure for controlling the
$k$-FWER that implicitly accounts for the exact dependence structure
in linear regression problems. In particular, we prove finite-sample
$k$-FWER control for general design matrices without
any knowledge of the noise variance, and show in simulations that the
power can be substantially greater than state-of-the-art alternatives.

Much previous work has studied controlling the $k$-FWER under
varying assumptions on the statistical dependence among the hypothesis
test statistics or $p$-values. The bulk of such work has dealt with
procedures that act directly on the $p$-values. When there are more
observations than variables and the noise is \iid Gaussian, ordinary least squares regression
generates dependent $t$-statistics for all variables, allowing those
procedures that can account for the dependence structure to be applied
to the associated $p$-values. Unfortunately, the joint distribution of such $p$-values does not generally satisfy popular dependence assumptions
such as positive regression dependence on subset \citep{benjamini2001}
or multivariate total positivity \citep{Karlin1980}. Furthermore,
many of the procedures that can account for general dependence
structures do so nonparametrically through resampling. However,
resampling procedures tend to require extra assumptions such as
subset-pivotality \citep{westfall1989} which do not hold in general in
the regression setting, or only provide exact control
asymptotically \citep{romano2007}. We mention here some work on controlling 
the $k$-FWER in finite samples and refer the reader to \cite{guo2014}
for a more thorough review. The most popular methods for FWER control are the Bonferroni \citep{Dunn1961} and Holm's \citep{Holm1979}
procedures, neither of which require assumptions on the dependence
among $p$-values. Under independence, the Bonferroni procedure can be
improved using the \v{S}id\'{a}k correction \citep{Sidak1967}, or one can employ Hochberg's
step-up procedure \citep{Hochberg1988}. In \citet{lehmann2005},
step-down procedures generalizing Bonferroni and Holm's procedures are
presented, while \citet{romano2007} introduce a generic step-down
procedure, all for controlling the $k$-FWER. \citet{romano2006} also present step-up procedures for
controlling the $k$-FWER under arbitrary unknown dependence. 

To avoid confusion, we point out that the recent work of
\cite{lockhart2014} provides $p$-values for coefficients in a
linear model, however they deal with a different notion of
a null hypothesis than used here. In their framework, the null
hypotheses are defined sequentially with
respect to a growing model, wherein each time the model size is
increased by one, the null hypothesis is that the new variable is
uncorrelated with the response, conditional on only the variables
already included in the model.

The remainder of the paper is structured as
follows. Section~\ref{sec:prel-knock} introduces notation and gives a
short introduction to the knockoffs framework. Section~\ref{sec:familyw-error-contr}
describes the knockoffs procedure for control of the
$k$-FWER and proves this control along with tail
bounds. Section~\ref{sec:contr-other-error} provides a brief
discussion of how the procedure can be used to control the PFER and
FDX. Section~\ref{sec:comp-with-other} compares our procedure to
state-of-the-art alternatives from the literature, both in terms of
practical considerations and power, in a series of
simulations. Section~\ref{sec:real-data-experiment} demonstrates an
implementation on a real dataset from genetics, and
Section~\ref{sec:discussion} concludes with discussion and directions
for future research.


\section{Preliminaries for knockoffs}
\label{sec:prel-knock}
In this section, we introduce the knockoffs machinery of
\citet{knockoff} at a minimal level to be sufficient for our
exposition of $k$-FWER control. This material is largely
borrowed from the reference \cite{knockoff}. In referring to the
knockoffs framework, we always assume that the number of observations
$n$ is at least the number of variables $p$, the design matrix $\bm X$ has
full rank so that the Gram matrix $\bm X^{\T} \bm X$ is invertible, and the
noise term $\bm z$ has independent Gaussian entries. We would like to briefly
emphasize here that $n \ge p$ is necessary for the multiple hypothesis
testing problem to even be well-defined. For any linear regression
problem, the ``true'' coefficient vector is only statistically
well-defined modulo addition with any vector in the null space of the
design matrix. If $p > n$, then the design matrix has a nontrivial
null space, thus allowing zeros and nonzeros in the coefficient vector
to arise and disappear, changing the fundamental values of the null
hypotheses, without changing the data-generating process at
all. Except for this non-degeneracy assumption, the knockoffs
machinery works for general designs $\bm X$ and does not even require
knowledge of noise variance $\sigma^2$.

To start with, again, consider the linear model
\[
\bm y = \bm X \bm \beta + \bm z,
\]
where the noise vector $\bm z$ has independent $\mathcal{N}(0, \sigma^2)$ entries, and each column of $\bm X$ has been normalized to have unit $\ell_2$-norm, that is, $\|\bm X_j\| = 1$ for all $1 \le j \le p$. The first step of this method is to construct the knockoff design, denoted as $\widetilde{\bm X} \in \R^{n \times p}$, that obeys
\begin{equation}\label{eq:ko_symm}
\widetilde{\bm X}^{\T}\widetilde{\bm X} = \bm X^\T \bm X, \quad \bm X^\T \widetilde{\bm X} = \bm X^\T \bm X - \operatorname{Diag}(\bm s),
\end{equation}
where $\bm s \in \R^p$ has nonnegative entries and the superscript $\T$ denotes matrix transpose hereafter. There are multiple ways to construct this knockoff
design. The first equality forces $\widetilde{\bm X}$ to have the same
correlation structure among its columns as $\bm X$. In the ideal case
of $n \ge 2p$, it can be guaranteed that the $2p$ column vectors of
$\bm X$ and $\widetilde{\bm X}$ are jointly linearly independent. By the second equality, for every $1 \le j \le p$, the original variable $\bm X_j$ and the knockoff counterpart $\widetilde{\bm X}_j$ have the same correlation with all the other $2p - 2$ variables, namely, $\bm X_i, \widetilde{\bm X}_i$ for $i \ne j$. At a high level, we can view the knockoff design as a control group as compared to the original design $\bm X$, which is treated as the case group.

Denote by $\ko = [\bm X, \widetilde{\bm X}] \in \R^{n \times 2p}$ the
concatenation of the original design and the knockoff design. With
$\ko$ in hand, the next step is to generate statistics for each
variable. One way to do so, suggested in \citet{knockoff}, is by fitting the entire Lasso regularization path on the augmented design,
\begin{equation}\label{eq:Lasso}
\widehat{\bm\beta}(\lambda) = \underset{\bm b \in \R^{2p}}{\operatorname{argmin}} \quad \frac12 \|\bm y - \ko \bm b\|^2 + \lambda\|\bm b\|_1,
\end{equation}
and letting $Z_j $ be the first $\lambda$ such that $\widehat\beta_j$ is nonzero. Formally,
\[
Z_j = \sup\{\lambda: \widehat\beta_j(\lambda) \ne 0\}.
\]
As pointed in the reference paper, many alternative statistics,
including some based on least-squares, least angle regression
\citep{lars} and sorted-$\ell_1$-penalized estimation \citep{slope,
  slopeminimax}, can be used instead as long as they obey the
sufficiency and antisymmetry properties defined therein. Defining
$\widetilde Z_j$ analogously for each knockoff variable $\widetilde{\bm X}_j$, the knockoff statistics (using slightly different notation
than in the original paper) are
\[
W_j = \max\{Z_j, \widetilde Z_j \}, \quad \chi_j = \operatorname{sgn}(Z_j - \widetilde{Z}_j),
\]
where $\operatorname{sgn}(x) = -1, 0, 1$ if $x < 0, x= 0, x > 0$,
respectively. The following result, due to \citet{knockoff},
characterizes the joint distribution of the null $\chi_j$. We say $j$
is a true null when $\beta_j = 0$ and a false null otherwise.
\begin{lemma}[\citet{knockoff}]\label{lm:knockoff_key}
Conditional on all $W_j$ and all false null $\chi_j$, all true null
$\chi_j$ are jointly independent and uniformly distributed on $\{-1,1\}$.
\end{lemma}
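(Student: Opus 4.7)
The plan is to exploit the column-swapping exchangeability built into the knockoff construction. For a subset $S \subseteq \{1,\ldots,p\}$, let $\ko^{(S)}$ be the matrix obtained from $\ko = [\bm X, \widetilde{\bm X}]$ by swapping $\bm X_j$ with $\widetilde{\bm X}_j$ for each $j \in S$, and let $f$ denote the map that sends an augmented design and response vector to the resulting $(W,\chi)$. The goal is to show that whenever $S$ consists only of true-null indices,
\[
f(\ko^{(S)}, \bm y) \;\overset{d}{=}\; f(\ko, \bm y).
\]
The antisymmetry property of the $Z_j$'s (and hence of $W_j,\chi_j$) then guarantees that the left-hand side equals $(W, \chi')$ with $\chi'_j = -\chi_j$ for $j \in S$ and $\chi'_j = \chi_j$ otherwise, so the identity reads $(W,\chi) \overset{d}{=} (W, \epsilon_S \circ \chi)$ for every $S$ made of true-null indices.

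To prove this distributional identity, I would use the sufficiency property, which says that $(W,\chi)$ depends on $(\ko,\bm y)$ only through the pair $(\ko^\T \ko,\; \ko^\T \bm y)$ (this is transparent for the Lasso statistic, whose objective is a function of these quantities up to an additive constant). It therefore suffices to check two things. First, $\ko^{(S)\T}\ko^{(S)} = \ko^\T \ko$: the block form dictated by (\ref{eq:ko_symm}) places $\bm X^\T \bm X$ on both diagonal blocks and the symmetric matrix $\bm X^\T \bm X - \operatorname{Diag}(\bm s)$ on the off-diagonal blocks, and this form is invariant under any permutation that simultaneously exchanges $j$ and $j+p$. Second, $\ko^{(S)\T}\bm y \overset{d}{=} \ko^\T \bm y$: writing $\ko^\T \bm y = \ko^\T \bm X \bm \beta + \ko^\T \bm z$, the noise term is $\mathcal{N}(0,\sigma^2 \ko^\T\ko)$ and is therefore invariant in distribution under the swap permutation by the Gram-matrix identity just established; and for the signal term, a direct computation using (\ref{eq:ko_symm}) shows that the $j$-th and $(j+p)$-th entries of $\ko^\T \bm X \bm \beta$ differ by exactly $s_j \beta_j$, which vanishes whenever $\beta_j = 0$. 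Thus the signal term is pointwise unchanged by any swap restricted to true nulls.

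Combining sufficiency with these two ingredients yields the claimed distributional identity, and hence the sign-flip symmetry $(W,\chi) \overset{d}{=} (W, \epsilon_S \circ \chi)$ for every true-null subset $S$. To finish, I apply this symmetry with $S$ ranging over all subsets of the true nulls. Since the transformation only alters true-null coordinates, the conditional distribution of the vector of true-null $\chi_j$'s given $W$ and the false-null $\chi_j$'s is invariant under arbitrary sign flips of its entries; the only distribution on $\{-1,+1\}^{|\text{true nulls}|}$ with this property is the uniform (Rademacher) distribution, which is exactly jointly independent $\pm 1$. The case $\chi_j = 0$ has probability zero because $\bm y$ is absolutely continuous.

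I expect the main obstacle to be the careful algebra in the second ingredient: one must verify that the off-diagonal correction $-\operatorname{Diag}(\bm s)$ in (\ref{eq:ko_symm}) couples $\bm X_j$ and $\widetilde{\bm X}_j$ only through a single scalar $s_j \beta_j$, so that restricting $S$ to true nulls is precisely what kills the discrepancy. The rest is a conceptually clean reduction from exchangeability to Rademacher symmetry, once sufficiency and antisymmetry are invoked correctly.
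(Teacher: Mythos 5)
Your argument is correct and is essentially the approach the paper relies on: the paper does not prove this lemma itself but defers to \citet{knockoff}, whose proof likewise combines the sufficiency and antisymmetry properties with the invariance of $\ko^\T\ko$ and the distributional invariance of $\ko^\T\bm y$ under swapping true-null columns, exactly as you reconstruct. The only gloss is your claim that $\chi_j=0$ has probability zero, which (like the paper's own remark that all $W_j$ are a.s.\ unique) quietly ignores the event that neither $\bm X_j$ nor $\widetilde{\bm X}_j$ ever enters the Lasso path, possible with positive probability when $n<2p$; this is harmless since such variables are never rejected.
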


This simple lemma is very helpful in proving $k$-FWER
control. Its proof follows from the symmetry between $\bm X_j$ and
$\widetilde{\bm X}_j$ if $\beta_j = 0$, which is provided by the
construction \eqref{eq:ko_symm}. The lemma shows that $\chi_j$ can be
interpreted as a one-bit $p$-value, in the sense that it has equal chance to
take $1$ or $-1$ if $\beta_j = 0$. In fact when $\beta_j = 0$, the
knockoff symmetry characterized in \eqref{eq:ko_symm} introduces
exchangeability between $\bm X_j$ and its knockoff counterpart $\widetilde{\bm X}_j$ in the Lasso path \eqref{eq:Lasso}. Hence, $\bm X_j$ and $\widetilde{\bm X}_j$
are equally likely to enter the Lasso path first. Conversely, if
$\beta_j \neq 0$, then $\bm X_j$ is likely to enter
before $\widetilde{\bm X}_j$ so that $\chi_j=1$. Thus a large $W_j$ and a
positive $\chi_j$ provide evidence against the $j$th null hypothesis $H_{0, j}: \beta_j = 0$.


\section{$k$-familywise error rate control}
\label{sec:familyw-error-contr}
Inspired by the interpretation of the statistics $W_j$ and $\chi_j$,
it is reasonable to reject hypotheses with positive signs $\chi_j$ and
large $W_j$. Parameterized by a positive integer $v$, the knockoffs procedure
for controlling the $k$-FWER is as follows.
\begin{step}
Denote by $W_{\rho(1)} \ge W_{\rho(2)} \cdots \ge W_{\rho(p)}$ the order statistics of $\bm W$, where $\rho(1), \ldots, \rho(p)$ is a permutation of $1, \ldots, p$.
\end{step}
\begin{step}
Let $j^\star$ be the index of the $v$th $-1$ in the sequence $\chi_{\rho(1)}, \ldots, \chi_{\rho(p)}$. If fewer than $v$ negatives appear, set $j^\star = p$.
\end{step}
\begin{step}
Reject all the null hypotheses $H_{0, \rho(j)}$ whenever $j \le j^\star$ and $\chi_j = +1$.
\end{step}
More compactly, define the threshold
\begin{equation}\nonumber
T_v = \sup \Big\{ t > 0: \#\{j: W_j \ge  t, \chi_j = -1\} = v \Big\},
\end{equation}
with the usual convention that $\sup \emptyset = -\infty$. The
multiplicity of $W_j$ is not accounted for since all $W_j$ are unique
with probability 1. Then, the knockoffs procedure rejects all $H_{0, j}$ with $W_j \ge T_v$ and $\chi_j = +1$.

Before characterizing the distribution of false discoveries made by
the knockoffs procedure, we define some notation. Let $\nulls = \{1 \le
j \le p: \beta_j = 0\}$ be the set of true null hypotheses and $\NB(m,
q)$ denote a negative binomial random variable, which counts the
number of successes before the $m$th failure in a sequence
of independent Bernoulli trials with success probability $q$.

\begin{lemma}\label{lm:negative_binomial}
For any integer $v \ge 1$, the false discovery number
\[
V = \ \#\{j \in \nulls: W_j \ge T_v \mbox{ and } \chi_j = +1 \}
\]
is stochastically dominated by $\NB(v, 1/2)$.
\end{lemma}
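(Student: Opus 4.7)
The plan is to reduce the claim to a coupling with an i.i.d.\ fair coin sequence, with Lemma~\ref{lm:knockoff_key} supplying the requisite independence. Throughout I condition on $\bm W$ together with the non-null signs $\{\chi_j : j \notin \nulls\}$; under this conditioning, Lemma~\ref{lm:knockoff_key} makes the null signs $\{\chi_j : j \in \nulls\}$ i.i.d.\ uniform on $\{-1,+1\}$. List the null indices in decreasing order of $W$ as $\rho'(1),\ldots,\rho'(m)$ with $m = |\nulls|$, write $\xi_k := \chi_{\rho'(k)}$, and set
\[
c_k \;:=\; \#\{j \notin \nulls : \chi_j = -1,\ W_j > W_{\rho'(k)}\},
\]
which is deterministic under the conditioning. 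Because the $W_j$ are almost surely distinct, $T_v$ equals the $v$-th largest $W_j$ among those with $\chi_j = -1$; consequently $W_{\rho'(k)} \ge T_v$ iff strictly fewer than $v$ overall negatives lie strictly above it, i.e., $c_k + S_{k-1} < v$, where $S_{k-1} := \#\{k' < k : \xi_{k'} = -1\}$. Unwinding the definition of $V$ gives the pointwise identity
\[
V \;=\; \sum_{k=1}^{m} \mathbf{1}\{\xi_k = +1\}\, \mathbf{1}\{c_k + S_{k-1} < v\}.
\]

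Now I would couple this with the pure i.i.d.\ version. Since $c_k \ge 0$, the event $\{c_k + S_{k-1} < v\}$ is contained in $\{S_{k-1} < v\}$, so the pointwise bound $V \le V'$ holds, where
\[
V' \;:=\; \sum_{k=1}^{m} \mathbf{1}\{\xi_k = +1\}\, \mathbf{1}\{S_{k-1} < v\}.
\]
The variable $V'$ counts the $+1$'s strictly before the $v$-th $-1$ in the i.i.d.\ fair Bernoulli sequence $\xi_1,\xi_2,\ldots,\xi_m$; padding the sequence to an infinite i.i.d.\ one can only add further positives before the $v$-th negative, so $V'$ is stochastically dominated by $\NB(v,1/2)$. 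Combining these two facts, $V \preceq \NB(v,1/2)$ conditional on $\bm W$ and $\{\chi_j : j \notin \nulls\}$, and then integrating out the conditioning yields the unconditional stochastic dominance.

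The main obstacle is the indicator reformulation of $V$, i.e., correctly characterising which null indices lie above $T_v$ in terms of the ordered null signs together with the non-null negatives. The key observation is that non-null negatives can only accelerate the procedure's stopping (they bring the running negative count closer to $v$) while contributing nothing to $V$, so discarding the $c_k$ terms gives a valid upper bound. Once this is recognised, the coupling $V \le V'$ follows in one line from $c_k \ge 0$, and the negative binomial distribution of $V'$ is immediate from the definition.
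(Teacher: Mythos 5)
Your proof is correct and follows essentially the same route as the paper: condition on $\bm W$ and the non-null signs so that Lemma~\ref{lm:knockoff_key} makes the null signs i.i.d.\ fair coins, observe that non-null negatives can only hasten the stopping at the $v$-th negative, and compare with the (possibly early-stopped) negative binomial count. Your version simply makes the paper's informal ``inserted $-1$'s stop the process no later'' step explicit via the indicator identity and the bound $c_k \ge 0$, which is a welcome bit of extra rigor but not a different argument.
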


\begin{proof}[Proof of Lemma~\ref{lm:negative_binomial}]
First, we prove this lemma in the case where $\nulls = \{1, \ldots,
p\}$, that is, $\beta_j = 0$ for all $j$. Conditional on all $W_j$,
Lemma \ref{lm:knockoff_key} concludes that $\chi_{\rho(1)}, \ldots,
\chi_{\rho(p)}$ are independent and each takes $+1$ and $-1$,
respectively, with probability $1/2$. Note that the permutation $\rho$
is deterministic conditional on the $W_j$. Recognizing that $V$ is the
number of positive $\chi_j$ before the $v$th negative or the
$p$th trial happens, whichever comes first, we see that $V$
is an early stopped negative binomial random variable. In the general case, false
null $\chi_j$ will insert $-1$'s into the process on the nulls,
causing it to stop no later than when $\nulls = \{1, \ldots,
p\}$. Therefore, $V$ is always stochastically dominated by $\NB(v, 1/2)$.
\end{proof}

The stochastic upper bound in Lemma~\ref{lm:negative_binomial} is
tight in the following sense. The distribution of $V$ can be made
arbitrarily close to $\NB(v, 1/2)$ under the global null by taking $p \gg v$, as in this case at least $v$ negative $\chi_j$ will
appear in the sequence with high probability. Next we present the main
result, which is immediate from Lemma~\ref{lm:negative_binomial} and
the negative binomial cumulative distribution function.

\begin{theorem}\label{thm:tail}
For any integer $k \geq 1$ and significance $0 < \alpha < 1$, let $v$ to be the largest integer satisfying
\begin{equation}\label{eq:solve_for_v}
\sum_{i = k}^{\infty}2^{-i-v}{i + v -1\choose{i}} \le \alpha.
\end{equation}
Then the knockoffs procedure with parameter $v$ controls the $k$-FWER at level $\alpha$, that is, $\pr(V \geq k) \le \alpha$.
\end{theorem}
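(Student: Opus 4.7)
The plan is to deduce the theorem immediately from Lemma~\ref{lm:negative_binomial} together with the closed-form tail probability of the negative binomial distribution; essentially no further probabilistic work is required.

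First I would invoke Lemma~\ref{lm:negative_binomial}, which yields that the false discovery count $V$ of the knockoffs procedure with parameter $v$ is stochastically dominated by $\NB(v, 1/2)$. By the definition of stochastic domination, this gives $\P(V \ge k) \le \P(\NB(v, 1/2) \ge k)$. Next I would expand the right-hand side using the explicit negative binomial PMF: recalling from the paragraph preceding Lemma~\ref{lm:negative_binomial} that $\NB(v, 1/2)$ counts the number of successes before the $v$th failure in a sequence of fair Bernoulli trials, a short combinatorial argument (count the arrangements of the first $i + v - 1$ trials having $i$ successes and $v - 1$ failures, followed by a failure on trial $i + v$) gives
\[
\P\bigl(\NB(v, 1/2) = i\bigr) = \binom{i + v - 1}{i} 2^{-(i + v)}.
\]
Summing over $i \ge k$ reproduces exactly the left-hand side of \eqref{eq:solve_for_v}, which by the choice of $v$ is at most $\alpha$. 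Combining with the previous display yields $\P(V \ge k) \le \alpha$, as desired.

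The main obstacle is essentially nonexistent: all the probabilistic content already sits in Lemma~\ref{lm:negative_binomial}, and the theorem merely translates the stochastic-domination conclusion there into the language of the negative binomial CDF. The only small point that warrants care is matching the combinatorial expression in \eqref{eq:solve_for_v} with the paper's convention for $\NB(v, 1/2)$ (``number of successes before the $v$th failure,'' as opposed to, e.g., total number of trials to the $v$th failure), which is a one-line stars-and-bars verification. It is also worth remarking in passing that the summand in \eqref{eq:solve_for_v} is nonnegative and the sum is monotone in $v$, so the ``largest integer $v$'' referred to in the statement is unambiguous whenever any such $v \ge 1$ exists.
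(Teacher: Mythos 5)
Your proposal is correct and is exactly the paper's argument: the paper states the theorem is immediate from Lemma~\ref{lm:negative_binomial} together with the negative binomial CDF, which is precisely the stochastic-domination-plus-PMF computation you spell out.
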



As a concrete example, taking $v = 4$ would provide $10$-FWER control at level $0.05$. As one may observe from
\eqref{eq:solve_for_v}, the integer $v$ as a function of the level
$\alpha$ cannot be continuous. Consequently, $\pr(V \geq k)$ is in
general lower than the target level $\alpha$. In particular, for
$\alpha \le 1/2^k$ no positive integer $v$ satisfies
\eqref{eq:solve_for_v}, so the naive procedure must reject
nothing. This matter can be easily resolved by randomization of $v$,
as we will show in Remark~\ref{rem:power}.


To better understand the knockoffs procedure, we may want to know how
many false rejections are made when the $k$-FWER is not
controlled. To this end, the following result bounds the tail
probability of $V$, or the probability of making many more rejections
than expected.
\begin{corollary}\label{cor:chernoff}
For arbitrary $a > 0$, the error rate of the knockoffs procedure with parameter $v$ obeys
\[
\pr(V \ge (1+a)v ) \le \theta(a)^v,
\]
where $\theta(a) = \frac{(a+2)^{a+2}}{2^{a+2}(a+1)^{a+1}} < 1$.
\end{corollary}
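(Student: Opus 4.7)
The plan is to reduce the problem to a Chernoff bound for a negative binomial random variable, exploiting the stochastic domination from Lemma~\ref{lm:negative_binomial}. Since $V$ is stochastically dominated by $\NB(v, 1/2)$, we have $\pr(V \ge (1+a)v) \le \pr(\NB(v, 1/2) \ge (1+a)v)$, so it suffices to control the right-hand side.

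To that end, I would write $\NB(v, 1/2) = G_1 + \cdots + G_v$ as a sum of $v$ i.i.d.\ geometric random variables $G_i$ with $\pr(G_i = k) = 2^{-(k+1)}$ for $k = 0, 1, 2, \ldots$, whose common moment generating function is
\[
\E[\e{tG_i}] = \sum_{k=0}^{\infty} \e{tk} 2^{-(k+1)} = \frac{1}{2 - \e{t}}, \qquad t < \log 2.
\]
By independence, $\E[\e{t \NB(v,1/2)}] = (2 - \e{t})^{-v}$, and Markov's inequality then gives
\[
\pr(\NB(v, 1/2) \ge (1+a)v) \le \left( \frac{\e{-t(1+a)}}{2 - \e{t}} \right)^v
\]
for every $0 < t < \log 2$.

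The next step is to optimize the bracketed expression in $t$. Taking the logarithmic derivative and setting it to zero yields $\e{t^\star} = 2(1+a)/(2+a)$, which lies in $(1,2)$, hence $t^\star \in (0, \log 2)$ as required. Substituting back gives $2 - \e{t^\star} = 2/(2+a)$ and $\e{-t^\star(1+a)} = \bigl( (2+a)/(2(1+a)) \bigr)^{1+a}$, so that
\[
\frac{\e{-t^\star(1+a)}}{2 - \e{t^\star}} = \frac{(a+2)^{a+2}}{2^{a+2}(a+1)^{a+1}} = \theta(a),
\]
which establishes the desired inequality $\pr(V \ge (1+a)v) \le \theta(a)^v$.

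Finally, to verify the strict bound $\theta(a) < 1$ for every $a > 0$, I would introduce $g(a) = (a+1)\log(a+1) - (a+2)\log\frac{a+2}{2}$, note that $g(0) = 0$, and compute $g'(a) = \log\frac{2(a+1)}{a+2} > 0$ for $a > 0$, so $g(a) > 0$ on $(0,\infty)$, which is equivalent to $\theta(a) < 1$. None of the steps is especially subtle; the only mildly delicate point is checking that the optimizer $t^\star$ falls inside the convergence region $(0, \log 2)$ of the MGF, which follows automatically once $a > 0$.
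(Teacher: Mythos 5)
Your proposal is correct and follows essentially the same route as the paper: reduce to $\NB(v,1/2)$ via Lemma~\ref{lm:negative_binomial}, apply Markov's inequality to the moment generating function $(2-\e{t})^{-v}$, and choose the same optimal exponent $t^\star = \log(2+2a)-\log(2+a)$, which is exactly the paper's $\eta$. The only additions are your derivation of the MGF and your monotonicity argument verifying $\theta(a)<1$, a fact the paper states without proof; both are correct.
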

\begin{proof}[Proof of Corollary \ref{cor:chernoff}]
By Lemma \ref{lm:negative_binomial}, it suffices to prove the inequality when $V$ is distributed as $\mathrm{NB}(v, 1/2)$. For any positive number $\eta < \log 2$, from the Markov inequality we get
\begin{equation}\nonumber
\pr(V \ge k) \le  \frac{\E (\e{\eta V} )}{\e{(1+a)\eta v}} = \frac{1}{(2 - \e{\eta})^v\e{(1+a) \eta v}}.
\end{equation}
The desired bound follows from taking $\eta = \log (2 + 2a) - \log(2 + a)$.
\end{proof}

\begin{remark}[Power Improvement]
\label{rem:power}
As mentioned earlier, the knockoffs procedure suffers from a
discretization problem, especially for small $k$, but this can be
remedied by randomization as follows. For any desired level $\alpha
\in (0, 1)$, there must exist an integer $v \ge 0$ such that
\[
\pr_v(V \ge k) \le \alpha \le \pr_{v+1}(V \ge k),
\]
where the subscript $v$ or $v+1$ emphasizes the parameter of the
knockoffs procedure. We can devise a mixture procedure that
obeys exactly $\pr(V \ge k) = \alpha$ by putting weights $\omega$ and
$1 - \omega$, respectively, on the knockoffs procedures with parameters
$v$ and $v+1$, where
\[
\omega = \frac{\pr_{v+1}(V \ge k)-\alpha}{\pr_{v+1}(V
  \ge k) - \pr_v(V \ge k)}.
\]

Furthermore, as with any procedure controlling the $k$-FWER,
power can always be improved without affecting the $k$-FWER by always making at least $k-1$ rejections. In the case of
knockoffs, if we were going to make fewer than $k-1$ rejections, we can simply
continue rejecting the indices with the largest $W_j$ and positive
$\chi_j$ until there are $k-1$. The benefit of this modification
depends on the ordering of the hypotheses induced by $W_j$. 
\end{remark}

\section{Controlling other error rates}
\label{sec:contr-other-error}

This paper has been about controlling the $k$-FWER, but the procedure
introduced can be used to control other Type I error rates as well,
namely the PFER and the FDX.

Originally proposed by John Tukey in an unpublished work in 1953, the
PFER is defined as $\E (V)$, or in words, the
expected number of false discoveries. The control of this error rate
under general $p$-value dependence has not received as much attention in
the literature as other error rates, although both \cite{gordon2007}
and \cite{meng2014} have discussed using the Bonferroni procedure for
this purpose. Lemma~\ref{lm:negative_binomial} shows that the knockoffs
procedure for controlling the $k$-FWER also controls
the PFER at level $v$, as $\E (V) \le \E \, \NB(v, 1/2)
= \frac{1/2}{1 - 1/2}v = v$.

The FDX, also known as the $\gamma$-false discovery proportion, tail
probability for the proportion of false positives, or false
discovery excessive probability, is the probability that the FDP exceeds a
specified bound $\gamma$. It can be viewed as a more
stringent form of the FDR, and has received much
attention recently; see, for example \citet{guo2014}. A number of authors have noticed its
intimate connection with the $k$-FWER, and many of the most successful
FDX-controlling procedures in the literature can be posed as
meta-procedures applied to a family of $k$-FWER-controlling
procedures
\citep{van2004augmentation,genovese2004stochastic,romano2007}. We
briefly review three such meta-procedures, any one of which could be
combined with the knockoffs procedure introduced here, and defer further
investigation to future work.

In \citet{van2004augmentation}, the authors introduced a simple and intuitive
procedure which augments any FWER-controlling procedure to control the
FDX. This procedure was generalized to any $k$-FWER-controlling
procedure in \cite{genovese2006exceedance}. Once the
$k$-FWER-controlling procedure makes $R$ rejections, then if
$(k-1)/R>\gamma$, the augmentation
procedure makes no rejections, but if $(k-1)/R\le\gamma$, $r$ more
rejections can be made, where $r$ satisfies
$(k-1+r)/(R+r)\le\gamma$. This augmentation procedure controls the FDX
exactly when the underlying $k$-FWER-controlling procedure also
provides exact control.

\citet{genovese2004stochastic} proposed a test-inversion procedure
for FDX control, similar to the closure principle of \citet{marcus1976} for
FWER control, which was then investigated further in
\citet{genovese2006exceedance}. The inversion procedure runs global null
hypothesis tests on every subset of hypotheses, and then finds the
largest subset $S$ whose maximal
intersection with any subset for which the global null was not
rejected is at most $\gamma |S|$. Note that any $k$-FWER-controlling
procedure is also trivially a test of the global null hypothesis,
rejecting whenever $k$ or more rejections are made. Rejecting $S$
from the inversion procedure controls the FDX exactly, and although
in general it takes exponential time, for some global tests it can be
run in polynomial time \citep{genovese2004stochastic}.

Given a procedure that can control the
$k$-FWER for any $k \ge 1$, \citet{romano2007}
propose a heuristic that aims to control the FDX. In short,
given a prescribed level $\gamma$ and significance $\alpha$, both
between 0 and 1, this heuristic uses a $k$-FWER-controlling procedure
to make rejections for increasing $k$ until just before the number of
rejections goes above $k/\gamma-1$. Explicitly, let $R_k$ be the
number of rejections made by a procedure controlling the $k$-FWER. Then the
Romano--Wolf heuristic defines $\hat k$ as the smallest $k$ such that
$R_k < k/\gamma-1$ and makes rejections as if controlling the
$\hat k$-FWER. Although not rigorous due to its adaptivity in $\hat
k$, under some dependence assumptions, the Romano--Wolf heuristic is
shown to enjoy finite sample or asymptotic FDX control for
step-down procedures \citep{guo2007generalized, delattre2013new}.


\section{Comparison with other procedures}
\label{sec:comp-with-other}
As mentioned in the introduction, the structure and dependence between
coefficients in linear regression preclude the use of many existing
procedures. The state-of-the-art procedures that can be found in existing
literature and provide exact finite-sample control of the $k$-FWER in
linear regression are:
\begin{itemize}
\item[(a)] the generic step-down procedure of \cite{romano2007} applied to
  the least-squares $p$-values
\item[(b)] the step-up procedure of \cite{romano2006} applied
  to the least-squares $p$-values
\item[(c)] the adaptation of Holm's procedure to $k$-FWER applied to the least-squares $p$-values \citep{lehmann2005}
\item[(d)] for 1-FWER, the Lasso pathwise testing
  procedure of \cite{lee2013exact}
\item[(e)] also for the 1-FWER, the closure of any
  global hypothesis testing procedure, such as the $\chi^2$ test,
  that can be applied to $p$-values with any known dependence, applied
  to the least-squares $p$-values
\end{itemize}

Procedure (d) requires the user to know
$\sigma^2$ exactly, and both (d) and (e) take computation that is exponential in
the dimension, $p$, making them infeasible to use for problems of even
moderate size. As a result, we only compare our procedure to (a), (b), and
(c). It should be noted that the problem dimensions we considered
in simulations were still limited by procedure (b), whose computation
time is
$O(p^{k-1})$, since each threshold is computed as a maximum over subsets
of size $k-1$ from a superset of size up to $p$. There are also works
that obtain asymptotic control of the FWER under some assumptions on the distribution of the design matrix
(see, for example, \cite{chernozhukov2013, javanmard2014}). As knockoffs applies
under no assumptions on the design matrix and the error rates are
controlled exactly, we do not compare to such works here.

In each of the following simulations, we performed many independent
experiments to gauge how the performance of knockoffs,
both in absolute terms and relative to previous methods, depends on
correlation in the columns of $\bm X$, the sparsity of $\bm\beta$, and the
signal to noise ratio. In each experiment, $\bm X$ is generated by
normalizing the columns of a multivariate Gaussian matrix with
independent and identically distributed rows, and $\bm\beta$ is generated
by setting a pre-specified number of entries to zero, and setting the
rest to the same nonzero magnitude, which is also prespecified. The
following experiments are all performed in the sparse setting, as that
is what the canonical statistics $\bm W$ that use the Lasso are best-suited
for. However, nothing about the knockoffs framework to control
any Type I error rate is particularly tied to sparsity, and it is of
continuing interest to find different statistics $\bm W$ that achieve high
power in all manner of settings. In all the following simulations,
$n=1000$, $p=450$, $\sigma^2=25$, we control the 5-FWER at the 5\% level, and we apply the modifications in Remark~\ref{rem:power}. The step-up procedure is implemented using the critical
values suggested in \cite{romano2006}, namely their Equation (13). For a sake of reproducibility, the code to generate these figures is available at \url{http://wjsu.web.stanford.edu/code.html}.

Our first experiment took $\bm\beta$ to have 10 nonzero elements, all
with magnitude 10, and varied the pairwise correlation between the
columns of $\bm X$ from 0 to 0.5. 
\begin{figure}[ht!]
\centering
\hfill
\begin{subfigure}[b]{0.49\textwidth}
\centering
 \includegraphics[width=\textwidth]{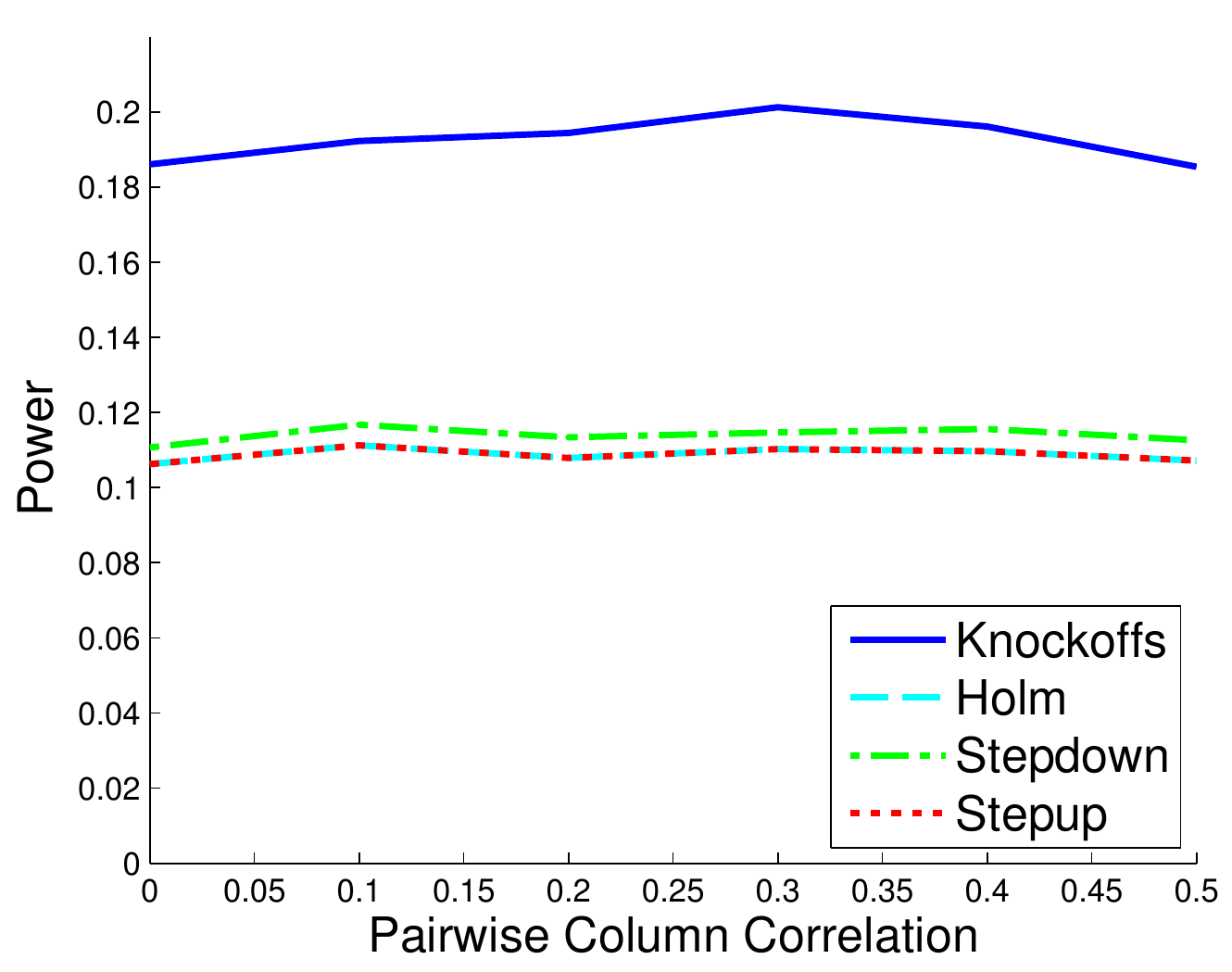}
\label{rhopower}
\end{subfigure}
\begin{subfigure}[b]{0.49\textwidth}
\centering
 \includegraphics[width=\textwidth]{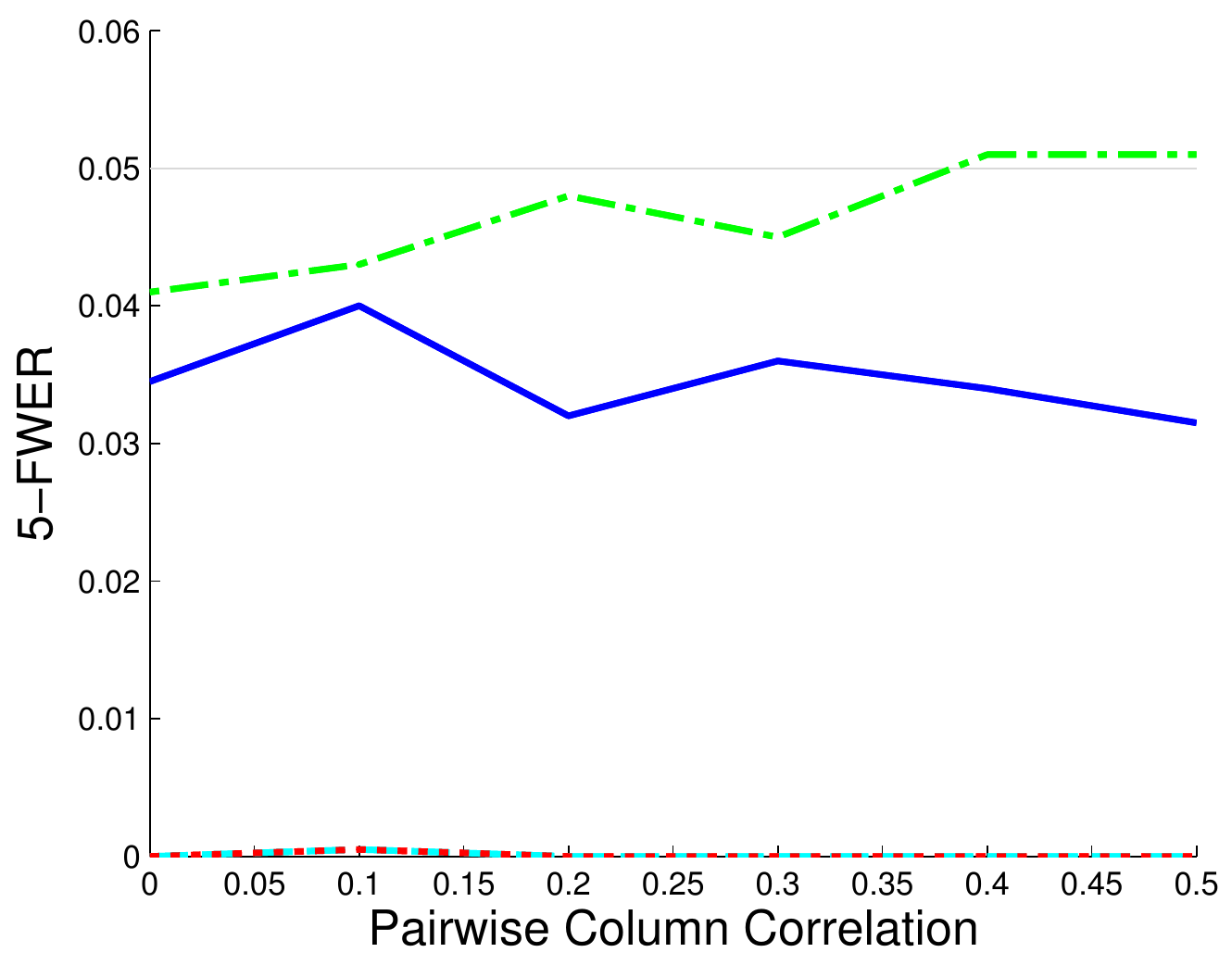}
\label{rhofwer}
\end{subfigure}
\hfill
  \caption{Comparison of Holm's procedure, generic step-down
    procedure, step-up procedure,
    and knockoffs for controlling the 5-FWER at the 5\% significance
    level. As functions of the column correlation of the design
    matrix, the procedures' powers are shown in (a), while
    the 5-FWER is given in (b), with the grey line denoting the
    nominal level of 5\%. The curves for Holm and step-up lie on top
    of one another. Each point is an average over 2000 simulations.}
\label{fig:rho}
\end{figure}
Figure~\ref{fig:rho} shows the power of the knockoff procedure nearly
doubling that of all alternative procedures. The power and
5-FWER of all four procedures is largely unaffected
by the correlation in the columns of $\bm X$.

Our second experiment generated columns for $\bm X$ independently, and
varied the sparsity of $\bm\beta$, with each nonzero coefficient having magnitude 10.
\begin{figure}[ht!]
\centering
\hfill
\begin{subfigure}[b]{0.49\textwidth}
\centering
 \includegraphics[width=\textwidth]{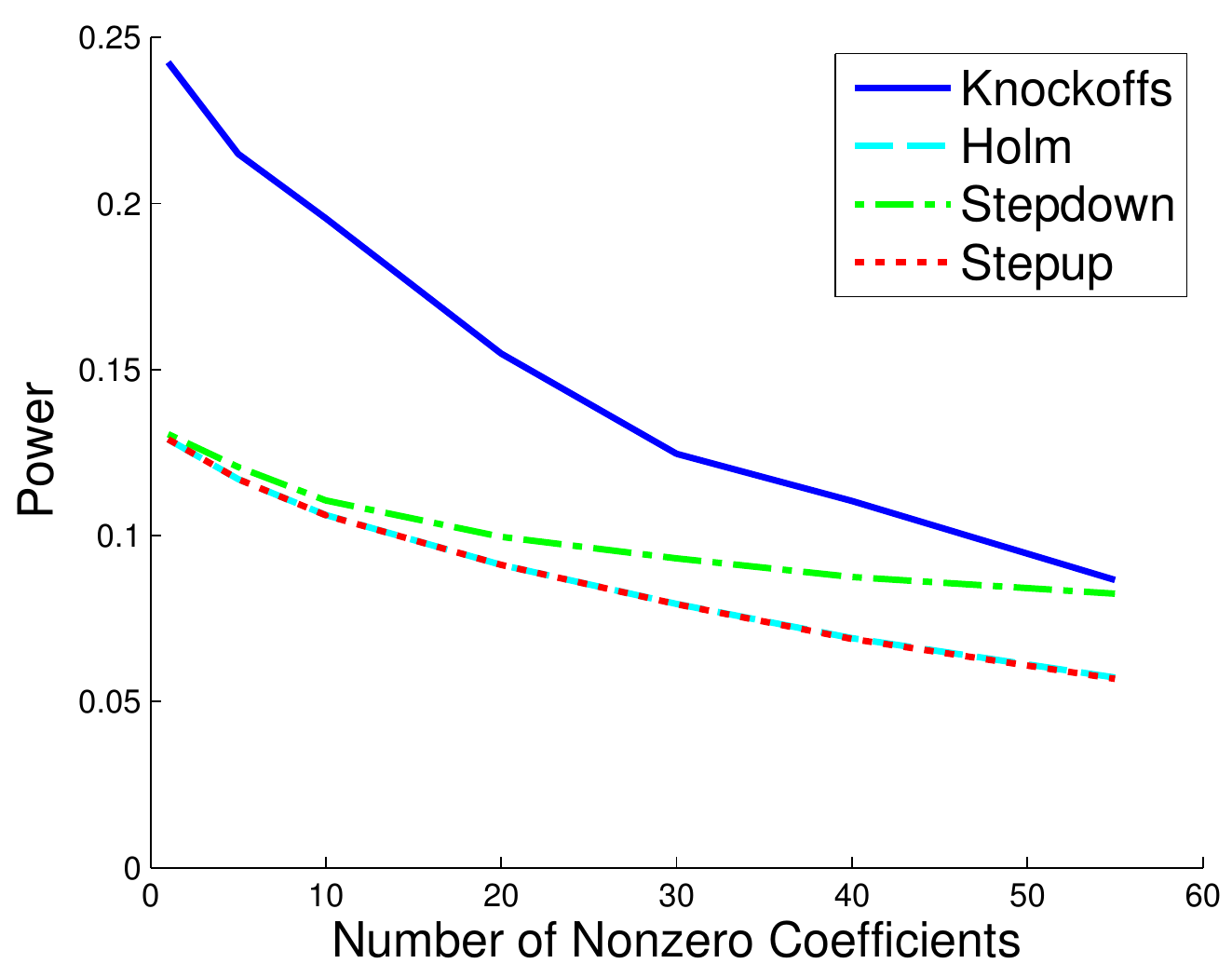}
\label{nnzpower}
\end{subfigure}
\begin{subfigure}[b]{0.49\textwidth}
\centering
 \includegraphics[width=\textwidth]{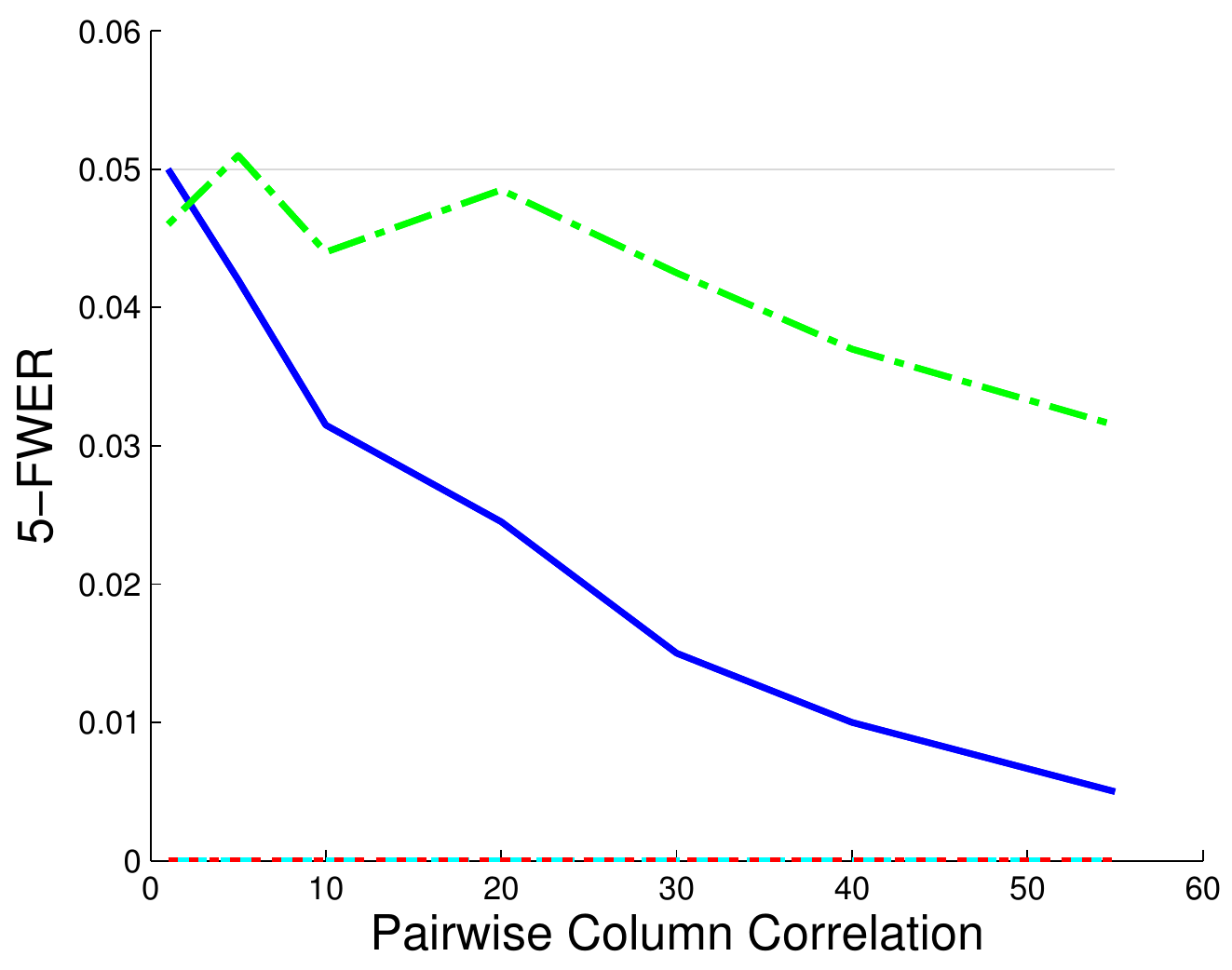}
\label{nnzfwer}
\end{subfigure}
\hfill
  \caption{Comparison of Holm's procedure, generic step-down
    procedure, step-up procedure, and knockoffs for controlling the
    5-FWER at the 5\% significance
    level. As functions of the number of nonzero coefficients, the
    procedures' powers are shown in (a), while
    the 5-FWER is given in (b), with the grey line denoting the
    nominal level of 5\%. The curves for Holm and step-up lie on top
    of one another. Each point is an average over 2000 simulations.}
\label{fig:nnz}
\end{figure}
Figure~\ref{fig:nnz} shows the power of the knockoff procedure approximately
doubling that of all alternative procedures in the sparsest regime
and gradually losing its advantage as the sparsity approaches
10\%. The 5-FWER of the knockoffs and step-down decrease as the
coefficient vector becomes less sparse, with that of knockoffs becoming
conservative especially quickly.

Our third experiment generated independent columns for $\bm X$, used
$\bm\beta$ with 10 nonzero entries, and varied the magnitude of the
nonzero entries on a logarithmic scale.

\begin{figure}[ht!]
\centering
\hfill
\begin{subfigure}[b]{0.49\textwidth}
\centering
 \includegraphics[width=\textwidth]{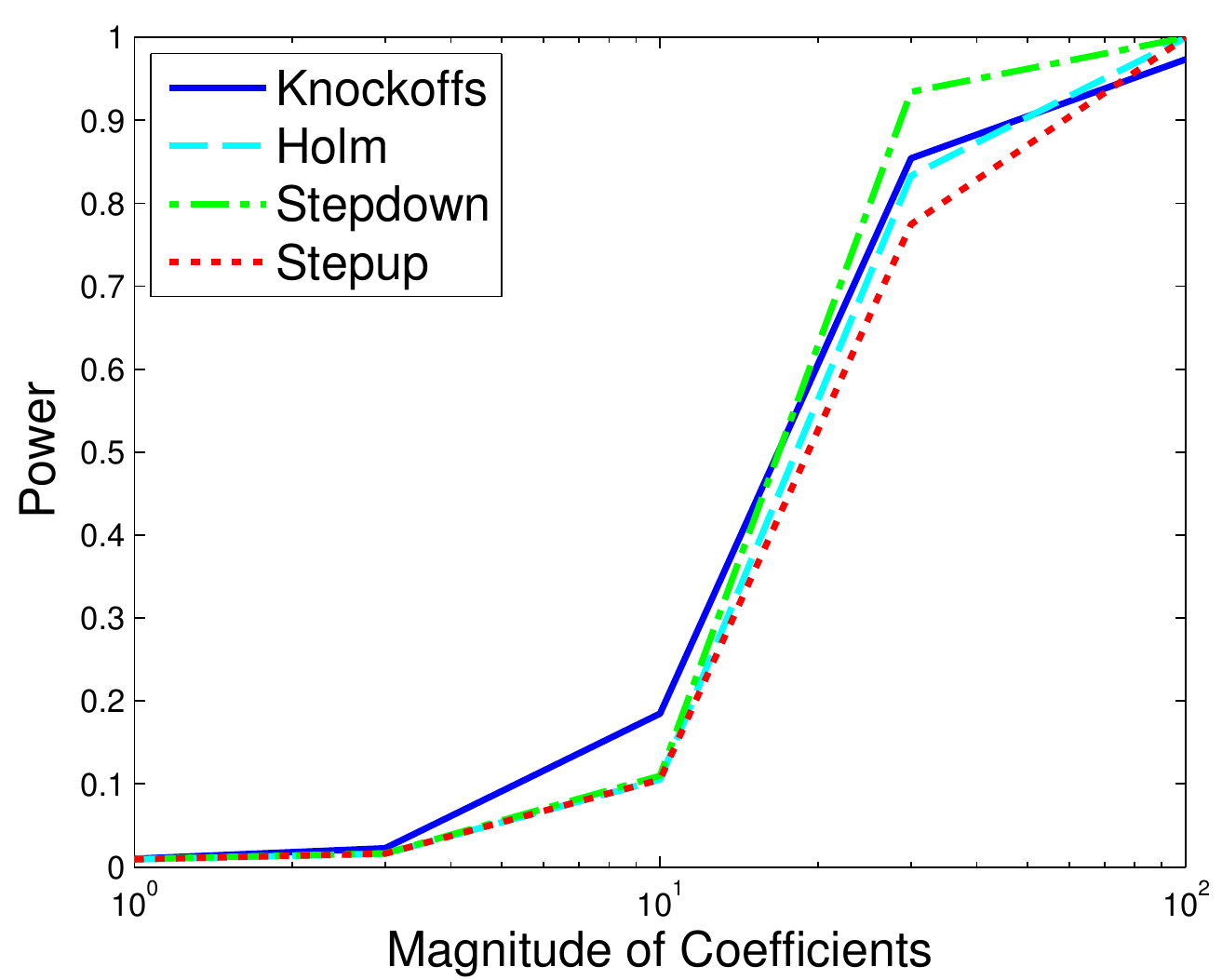}
\label{magpower}
\end{subfigure}
\begin{subfigure}[b]{0.49\textwidth}
\centering
 \includegraphics[width=\textwidth]{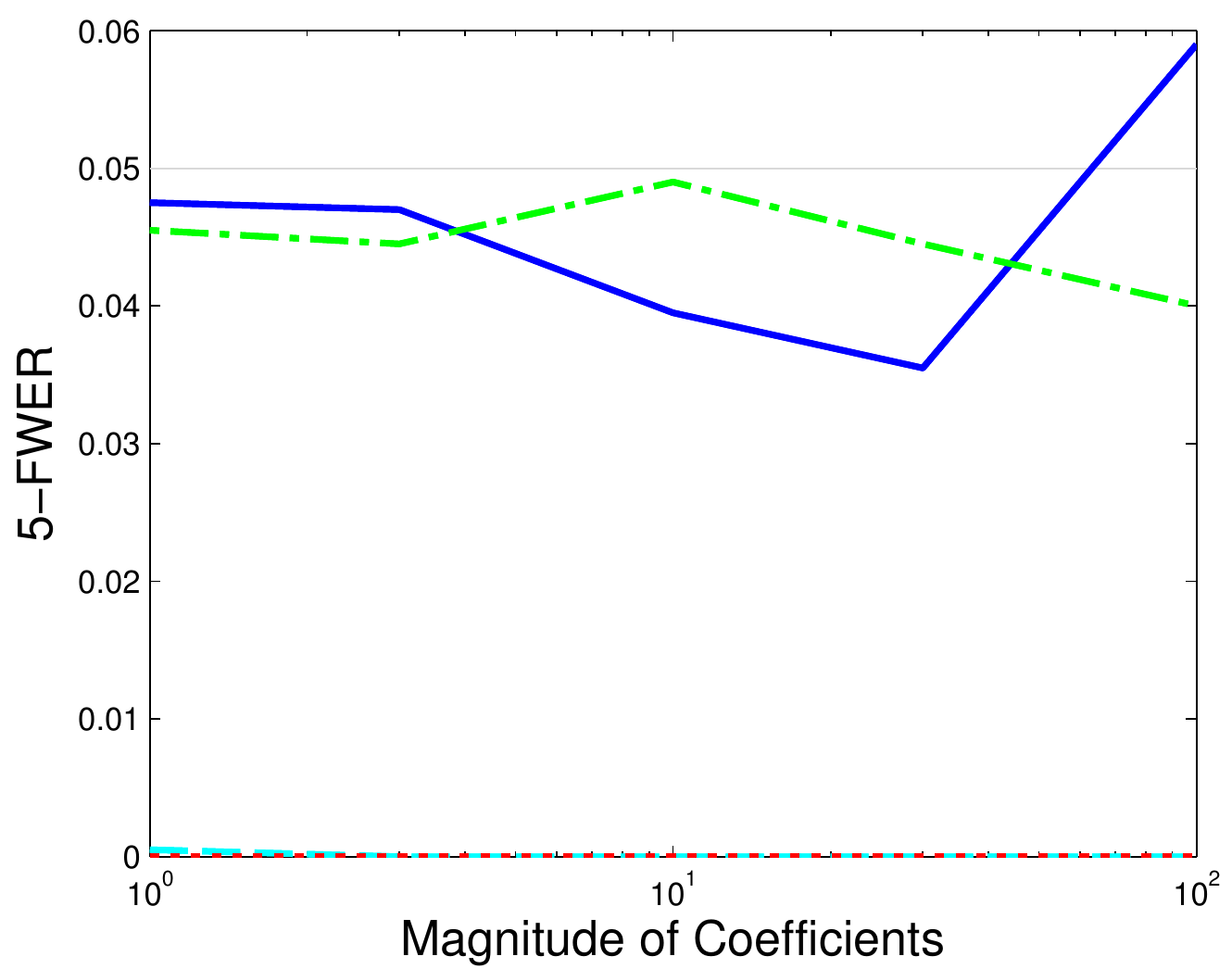}
\label{magfwer}
\end{subfigure}
\hfill
\caption{Comparison of Holm's procedure, generic step-down
    procedure, step-up procedure and knockoffs for controlling the
    5-FWER at the 5\% significance
    level. As functions of the magnitude of the nonzero coefficients, the
    procedures' powers are shown in (a), while
    the 5-FWER is given in (b), with the grey line denoting the
    nominal level of 5\%. Each point is an average over 2000 simulations.}
\label{fig:mag}
\end{figure}

Figure~\ref{fig:mag} shows the power of the knockoff procedure above
all alternative procedures in the low- to middle-power regimes, while
it actually has slightly less power in the very high-power regime,
corresponding to a signal-to-noise ratio $\|\bm\beta\|^2/\sigma^2 >
350$. This reversal can be explained by the fact that with non-orthogonal columns and a not-extremely-sparse $\bm\beta$, the Lasso will not perfectly
select all signal variables before the non-signal variables, even when
the signal-to-noise ratio is extremely high \citep{LassoFDR}. As such, the Lasso-based $\bm W$ statistic used in knockoffs never achieves a power of 1; this phenomenon could be remedied by using one of the least-squares-based $\bm W$ mentioned in \cite{knockoff}. The 5-FWER of all
four procedures is again largely unaffected by the coefficient magnitude.


\section{Real data experiment}
\label{sec:real-data-experiment}

In this section, we apply our method to a data set on HIV drug
resistance. Specifically, the data set, described and
analyzed in \cite{Rhee14112006} and also used in the original
knockoffs paper \cite{knockoff}, contains genotype information from
samples of HIV Type 1, along with drug
resistance measurements for 16 drugs across three classes. The
three classes are protease inhibitors
(PI), nucleoside reverse transcriptase inhibitors (NRTI), and
nonnucleoside reverse transcriptase inhibitors (NNRTI), each of which has its own
set of samples. Drug resistance was measured as the log-fold-increase
of resistance as compared to a control, and the genetic information
comes as single nucleotide polymorphisms (SNPs), and thus each binary value
represents the presence or absence of a minor allele at a given locus.

\begin{table}[!htp]
\def~{\hphantom{0}}
\captionsetup{width=1\textwidth, font=normal}
\caption{Multiple testing procedures applied to HIV drug resistance data sets}{%
\begin{tabular}{lcccccccc}
\toprule
Drug & Type & Samples & SNPs & FDR ko & $k$-FWER ko & Step-down &
Step-up & Holm \\
\midrule
APV & PI & 767 & 164 & 19/29 & 10/10 & 14/18 & 14/15 & 14/17 \\
ATV & PI & 328 & 104 & 22/28 & 18/19 & 18/20 & 14/14 & 17/19 \\
IDV & PI & 825 & 165 & 25/42 & 15/17 & 17/21 & 17/20 & 17/20 \\
LPV & PI & 515 & 141 & 17/18 & 13/14 & 17/18 & 13/13 & 14/14 \\
NFV & PI & 842 & 166 & 26/40 & 20/22 & 17/21 & 16/18 & 17/21 \\
RTV & PI & 793 & 163 & 20/26 & 18/18 & 17/23 & 15/17 & 15/20 \\
SQV & PI & 824 & 164 & 20/31 & 19/29 & 16/21 & 15/18 & 15/19 \\
X3TC & NRTI & 629 & 216 & 4/6 & 5/7 & 6/9 & 5/6 & 6/8 \\
ABC & NRTI & 623 & 216 & 16/35 & 16/31 & 8/11 & 8/11 & 8/11 \\
AZT & NRTI & 626 & 216 & 15/21 & 13/17 & 13/21 & 10/14 & 11/18 \\
D4T & NRTI & 625 & 216 & 15/26 & 13/21 & 11/12 & 10/11 & 10/11 \\
DDI & NRTI & 628 & 216 & 2/2 & 5/5 & 8/13 & 7/9 & 8/12 \\
TDF & NRTI & 351 & 148 & 6/6 & 8/8 & 9/11 & 7/8 & 9/10 \\
DLV & NNRTI & 730 & 231 & 10/25 & 10/16 & 11/25 & 11/20 & 11/22 \\
EFV & NNRTI & 732 & 236 & 11/21 & 11/19 & 10/17 & 10/16 & 10/16 \\
NVP & NNRTI & 744 & 236 & 10/23 & 8/13 & 7/15 & 7/12 & 7/13 \\
\midrule
\multicolumn{4}{r}{Average Number of True Discoveries} & \bf{14.9} &
\bf{12.6} & \bf{12.4} & \bf{11.2} & \bf{11.8} \\
\multicolumn{4}{r}{2-FWER} & \bf{0.81} & \bf{0.63} & \bf{0.88} &
\bf{0.63} & \bf{0.81} \\
\bottomrule
\end{tabular}}
\label{HIVtab}
\bigskip
\\
{\small Summary: For each procedure, we report the number of true positives and the
number of total discoveries, separated by a slash. At the end of the
table we report summary statistics for each procedure. ko stands for knockoffs.}\\
\end{table}

In order to analyze the data, some cleaning was required. In
particular, some samples do not have resistance measurements for some
of the drugs, so these samples were removed on a drug-by-drug
basis. Also, some SNPs have so few
mutations that either their effect would be too hard to detect, or
their inclusion actually causes rank-deficiency in the design
matrix. As such, for each drug we only included polymorphisms with at
least five mutations present in the culled sample; this was the
minimum required to ensure all design matrices were full-rank.

We compare our knockoffs procedure to the step-down, step-up, and Holm procedures,
as well as to the original knockoffs procedure for controlling FDR at level $q$. As
$k$-FWER is often used as an exploratory analysis, and to make
analysis comparable with knockoffs for FDR control, we set
$\alpha=0.5$ (FDR controls a mean, and with
$\alpha=0.5$, $k$-FWER controls a median). We set
$k=2$ and $q=0.2$, and ran all five procedures on all 16 drugs, the
results of which are summarized in Table~\ref{HIVtab}.

Although the ground truth is unknown in this case, there
exists an approximate ground truth from treatment-selected mutation (TSM)
panels \citep{Rhee01082005}. These panels list mutations that were
found to be statistically significantly more frequent in virus samples
from individuals treated with a drug in that class than samples from
individuals who had not. Thus in our experiment evaluation, we
consider a SNP discovery for a given drug
to be true if it has a mutation listed in the TSM panel for that drug's class.

The table shows the number of total discoveries and false discoveries made
by each method on each data set. As suspected, FDR-controlling
knockoffs was more powerful than any of the $k$-FWER-controlling
procedures, but is harder to interpret as it never makes a very large
number of discoveries, and thus the FDP may be quite different from
$q$. The remaining procedures 
have varying levels of 2-FWER, but recall that the
error rates reported are likely to be overestimates, as there may be
important SNPs that the TSM panels missed. Still, we see that on this data set, the step-down and Holm
procedures commit more 2-familywise errors than knockoffs, while the
step-up procedure has over 10\% less power than knockoffs.


\section{Discussion}
\label{sec:discussion}
This work leaves a number of important avenues open for future
research. First, we mentioned in Section~\ref{sec:familyw-error-contr}
a number of methods that translate $k$-FWER-controlling procedures
into procedures for controlling the FDX. Investigating the best such
method could yield a powerful method for controlling another important
Type I error rate. Second, \cite{knockoff} mention in passing the
possibility of multiple knockoffs, i.e., constructing $m \ge 1$ sets of
knockoffs and replacing the one-bit $p$-values corresponding to the $\chi_j$'s
with $m+1$-discretized $p$-values. In the setting of FDR control, one
can search over many one-bit $p$-values and need only consider
what fraction, on average, may be false discoveries. However to
control the $k$-FWER, one must keep track of every false discovery,
and we may expect the extra resolution of multiple knockoffs to
provide more power to distinguish true discoveries from false
ones. Lastly, we feel the knockoffs framework is still a largely
untapped resource for generating multiple testing procedures. The
investigation of alternative $W_j$ statistics for ordering variables,
and the extension to other regression settings such as logistic
regression and higher-dimensional problems ($p>n$) are all important
open subjects.

We have presented a novel method for controlling the $k$-FWER in
the context of linear regression. Knockoffs requires no knowledge of
the noise variance and implicitly takes into account the exact
dependence structure of the problem, allowing it to provide
considerable power improvements over state-of-the-art alternatives in
a range of settings. This, along with its intuitive justification and
ease of computation, makes knockoffs a useful practical tool for
multiple hypothesis testing.


{\small
\subsection*{Acknowledgement}
We are grateful to Emmanuel Cand\`es for his encouragements and helpful comments. L.~J. was partially supported by an NIH training grant. W.~S. was partially supported by a General Wang Yaowu Stanford Graduate Fellowship.

\bibliography{ref}

\begin{thebibliography}{32}
\providecommand{\natexlab}[1]{#1}
\providecommand{\url}[1]{\texttt{#1}}
\expandafter\ifx\csname urlstyle\endcsname\relax
  \providecommand{\doi}[1]{doi: #1}\else
  \providecommand{\doi}{doi: \begingroup \urlstyle{rm}\Url}\fi

\bibitem[Barber and Cand\`es(2015)]{knockoff}
R.~F. Barber and E.~J. Cand\`es.
\newblock Controlling the false discovery rate via knockoffs.
\newblock \emph{The Annals of Statistics}, 43\penalty0 (5):\penalty0
  2055--2085, 2015.

\bibitem[Benjamini and Hochberg(1995)]{BH1995}
Y.~Benjamini and Y.~Hochberg.
\newblock Controlling the false discovery rate: A practical and powerful
  approach to multiple testing.
\newblock \emph{Journal of the Royal Statistical Society. Series B
  (Methodological)}, 57\penalty0 (1):\penalty0 pp. 289--300, 1995.
\newblock ISSN 00359246.
\newblock URL \url{http://www.jstor.org/stable/2346101}.

\bibitem[Benjamini and Yekutieli(2001)]{benjamini2001}
Y.~Benjamini and D.~Yekutieli.
\newblock The control of the false discovery rate in multiple testing under
  dependency.
\newblock \emph{Ann. Statist.}, 29\penalty0 (4):\penalty0 1165--1188, 08 2001.
\newblock \doi{10.1214/aos/1013699998}.
\newblock URL \url{http://dx.doi.org/10.1214/aos/1013699998}.

\bibitem[Bogdan et~al.(2015)Bogdan, Berg, Sabatti, Su, and Cand\`es]{slope}
M.~Bogdan, E.~v.~d. Berg, C.~Sabatti, W.~Su, and E.~J. Cand\`es.
\newblock {SLOPE} -- {A}daptive variable selection via convex optimization.
\newblock \emph{The Annals of Applied Statistics}, 9\penalty0 (3):\penalty0
  1103--1140, 2015.

\bibitem[Chernozhukov et~al.(2013)Chernozhukov, Chetverikov, and
  Kato]{chernozhukov2013}
V.~Chernozhukov, D.~Chetverikov, and K.~Kato.
\newblock Gaussian approximations and multiplier bootstrap for maxima of sums
  of high-dimensional random vectors.
\newblock \emph{The Annals of Statistics}, 41\penalty0 (6):\penalty0
  2786--2819, 2013.

\bibitem[Delattre and Roquain(2013)]{delattre2013new}
S.~Delattre and E.~Roquain.
\newblock New procedures controlling the false discovery proportion via
  {R}omano-{W}olf's heuristic.
\newblock \emph{arXiv preprint arXiv:1311.4030}, 2013.

\bibitem[Dunn(1961)]{Dunn1961}
O.~J. Dunn.
\newblock {Multiple Comparisons Among Means}.
\newblock \emph{Journal of the American Statistical Association}, 56\penalty0
  (293):\penalty0 52--64, 1961.
\newblock ISSN 0162-1459.
\newblock \doi{10.2307/2282330}.

\bibitem[Efron et~al.(2004)Efron, Hastie, Johnstone, and Tibshirani]{lars}
B.~Efron, T.~Hastie, I.~Johnstone, and R.~Tibshirani.
\newblock Least angle regression.
\newblock \emph{The Annals of Statistics}, 32\penalty0 (2):\penalty0 407--499,
  2004.

\bibitem[Genovese and Wasserman(2004)]{genovese2004stochastic}
C.~Genovese and L.~Wasserman.
\newblock A stochastic process approach to false discovery control.
\newblock \emph{The Annals of Statistics}, pages 1035--1061, 2004.

\bibitem[Genovese and Wasserman(2006)]{genovese2006exceedance}
C.~Genovese and L.~Wasserman.
\newblock Exceedance control of the false discovery proportion.
\newblock \emph{Journal of the American Statistical Association}, 101\penalty0
  (476):\penalty0 1408--1417, 2006.

\bibitem[Gordon et~al.(2007)Gordon, Glazko, Qiu, and Yakovlev]{gordon2007}
A.~Gordon, G.~Glazko, X.~Qiu, and A.~Yakovlev.
\newblock Control of the mean number of false discoveries, bonferroni and
  stability of multiple testing.
\newblock \emph{Ann. Appl. Stat.}, 1\penalty0 (1):\penalty0 179--190, 06 2007.
\newblock \doi{10.1214/07-AOAS102}.
\newblock URL \url{http://dx.doi.org/10.1214/07-AOAS102}.

\bibitem[Guo and Romano(2007)]{guo2007generalized}
W.~Guo and J.~P. Romano.
\newblock A generalized {S}idak-{H}olm procedure and control of generalized
  error rates under independence.
\newblock \emph{Statistical Applications in Genetics and Molecular Biology},
  6\penalty0 (1), 2007.

\bibitem[Guo et~al.(2014)Guo, He, and Sarkar]{guo2014}
W.~Guo, L.~He, and S.~K. Sarkar.
\newblock Further results on controlling the false discovery proportion.
\newblock \emph{Ann. Statist.}, 42\penalty0 (3):\penalty0 1070--1101, 06 2014.
\newblock \doi{10.1214/14-AOS1214}.
\newblock URL \url{http://dx.doi.org/10.1214/14-AOS1214}.

\bibitem[Hochberg(1988)]{Hochberg1988}
Y.~Hochberg.
\newblock {A sharper Bonferroni procedure for multiple tests of significance.}
\newblock \emph{Biometrika}, 75\penalty0 (4):\penalty0 800--802, 1988.
\newblock ISSN 0006-3444.
\newblock \doi{10.1093/biomet/75.4.800}.

\bibitem[Holm(1979)]{Holm1979}
S.~Holm.
\newblock A simple sequentially rejective multiple test procedure.
\newblock \emph{Scandinavian Journal of Statistics}, 6\penalty0 (2):\penalty0
  pp. 65--70, 1979.
\newblock ISSN 03036898.
\newblock URL \url{http://www.jstor.org/stable/4615733}.

\bibitem[Hommel and Hoffmann(1988)]{hh1988}
G.~Hommel and T.~Hoffmann.
\newblock Controlled uncertainty.
\newblock In P.~Bauer, G.~Hommel, and E.~Sonnemann, editors, \emph{Multiple
  Hypothesenprüfung / Multiple Hypotheses Testing}, volume~70 of
  \emph{Medizinische Informatik und Statistik}, pages 154--161. Springer Berlin
  Heidelberg, 1988.
\newblock ISBN 978-3-540-50559-4.
\newblock \doi{10.1007/978-3-642-52307-6_13}.
\newblock URL \url{http://dx.doi.org/10.1007/978-3-642-52307-6_13}.

\bibitem[Javanmard and Montanari(2014)]{javanmard2014}
A.~Javanmard and A.~Montanari.
\newblock Confidence intervals and hypothesis testing for high-dimensional
  regression.
\newblock \emph{The Journal of Machine Learning Research}, 15\penalty0
  (1):\penalty0 2869--2909, 2014.

\bibitem[Karlin and Rinott(1980)]{Karlin1980}
S.~Karlin and Y.~Rinott.
\newblock Classes of orderings of measures and related correlation
  inequalities. i. multivariate totally positive distributions.
\newblock \emph{Journal of Multivariate Analysis}, 10\penalty0 (4):\penalty0
  467 -- 498, 1980.
\newblock ISSN 0047-259X.
\newblock \doi{http://dx.doi.org/10.1016/0047-259X(80)90065-2}.
\newblock URL
  \url{http://www.sciencedirect.com/science/article/pii/0047259X80900652}.

\bibitem[Lee et~al.(2013)Lee, Sun, Sun, and Taylor]{lee2013exact}
J.~D. Lee, D.~L. Sun, Y.~Sun, and J.~E. Taylor.
\newblock Exact post-selection inference with the {L}asso.
\newblock \emph{arXiv preprint arXiv:1311.6238}, 2013.

\bibitem[Lehmann and Romano(2005)]{lehmann2005}
E.~L. Lehmann and J.~P. Romano.
\newblock Generalizations of the familywise error rate.
\newblock \emph{Ann. Statist.}, 33\penalty0 (3):\penalty0 1138--1154, 06 2005.
\newblock \doi{10.1214/009053605000000084}.
\newblock URL \url{http://dx.doi.org/10.1214/009053605000000084}.

\bibitem[Lockhart et~al.(2014)Lockhart, Taylor, Tibshirani, and
  Tibshirani]{lockhart2014}
R.~Lockhart, J.~E. Taylor, R.~J. Tibshirani, and R.~Tibshirani.
\newblock A significance test for the {L}asso.
\newblock \emph{Ann. Statist.}, 42\penalty0 (2):\penalty0 413--468, 04 2014.
\newblock \doi{10.1214/13-AOS1175}.
\newblock URL \url{http://dx.doi.org/10.1214/13-AOS1175}.

\bibitem[Marcus et~al.(1976)Marcus, Eric, and Gabriel]{marcus1976}
R.~Marcus, P.~Eric, and K.~R. Gabriel.
\newblock On closed testing procedures with special reference to ordered
  analysis of variance.
\newblock \emph{Biometrika}, 63\penalty0 (3):\penalty0 655--660, 1976.
\newblock \doi{10.1093/biomet/63.3.655}.
\newblock URL \url{http://biomet.oxfordjournals.org/content/63/3/655.abstract}.

\bibitem[Meng et~al.(2014)Meng, Wang, and Wu]{meng2014}
X.~Meng, J.~Wang, and X.~Wu.
\newblock Multiple comparisons controlling expected number of false
  discoveries.
\newblock \emph{Communications in Statistics - Theory and Methods}, 43\penalty0
  (13):\penalty0 2830--2843, 2014.
\newblock \doi{10.1080/03610926.2012.679764}.
\newblock URL \url{http://dx.doi.org/10.1080/03610926.2012.679764}.

\bibitem[Rhee et~al.(2005)Rhee, Fessel, Zolopa, Hurley, Liu, Taylor, Nguyen,
  Slome, Klein, Horberg, Flamm, Follansbee, Schapiro, and Shafer]{Rhee01082005}
S.-Y. Rhee, W.~J. Fessel, A.~R. Zolopa, L.~Hurley, T.~Liu, J.~Taylor, D.~P.
  Nguyen, S.~Slome, D.~Klein, M.~Horberg, J.~Flamm, S.~Follansbee, J.~M.
  Schapiro, and R.~W. Shafer.
\newblock {HIV}-1 protease and reverse-transcriptase mutations: Correlations
  with antiretroviral therapy in subtype {B} isolates and implications for
  drug-resistance surveillance.
\newblock \emph{Journal of Infectious Diseases}, 192\penalty0 (3):\penalty0
  456--465, 2005.
\newblock \doi{10.1086/431601}.
\newblock URL \url{http://jid.oxfordjournals.org/content/192/3/456.abstract}.

\bibitem[Rhee et~al.(2006)Rhee, Taylor, Wadhera, Ben-Hur, Brutlag, and
  Shafer]{Rhee14112006}
S.-Y. Rhee, J.~Taylor, G.~Wadhera, A.~Ben-Hur, D.~L. Brutlag, and R.~W. Shafer.
\newblock Genotypic predictors of human immunodeficiency virus type 1 drug
  resistance.
\newblock \emph{Proceedings of the National Academy of Sciences}, 103\penalty0
  (46):\penalty0 17355--17360, 2006.
\newblock \doi{10.1073/pnas.0607274103}.
\newblock URL \url{http://www.pnas.org/content/103/46/17355.abstract}.

\bibitem[Romano and Shaikh(2006)]{romano2006}
J.~P. Romano and A.~M. Shaikh.
\newblock Stepup procedures for control of generalizations of the familywise
  error rate.
\newblock \emph{Ann. Statist.}, 34\penalty0 (4):\penalty0 1850--1873, 08 2006.
\newblock \doi{10.1214/009053606000000461}.
\newblock URL \url{http://dx.doi.org/10.1214/009053606000000461}.

\bibitem[Romano and Wolf(2007)]{romano2007}
J.~P. Romano and M.~Wolf.
\newblock Control of generalized error rates in multiple testing.
\newblock \emph{The Annals of Statistics}, 35\penalty0 (4):\penalty0
  1378--1408, 08 2007.

\bibitem[Su and Cand\`es(2015)]{slopeminimax}
W.~Su and E.~J. Cand\`es.
\newblock {SLOPE} is adaptive to unknown sparsity and asymptotically minimax.
\newblock \emph{arXiv preprint arXiv:1503.08393}, 2015.

\bibitem[Su et~al.(2015)Su, Bogdan, and Cand\`es]{LassoFDR}
W.~Su, M.~Bogdan, and E.~J. Cand\`es.
\newblock False discoveries occur early on the {L}asso path.
\newblock \emph{arXiv preprint arXiv:1511.01957}, 2015.

\bibitem[van~der Laan et~al.(2004)van~der Laan, Dudoit, and
  Pollard]{van2004augmentation}
M.~J. van~der Laan, S.~Dudoit, and K.~S. Pollard.
\newblock Augmentation procedures for control of the generalized family-wise
  error rate and tail probabilities for the proportion of false positives.
\newblock \emph{Statistical Applications in Genetics and Molecular Biology},
  3\penalty0 (1), 2004.

\bibitem[\v{S}id\'{a}k(1967)]{Sidak1967}
Z.~\v{S}id\'{a}k.
\newblock Rectangular confidence regions for the means of multivariate normal
  distributions.
\newblock \emph{Journal of the American Statistical Association}, 62\penalty0
  (318):\penalty0 626--633, 1967.
\newblock \doi{10.1080/01621459.1967.10482935}.
\newblock URL \url{http://dx.doi.org/10.1080/01621459.1967.10482935}.

\bibitem[Westfall and Young(1989)]{westfall1989}
P.~H. Westfall and S.~S. Young.
\newblock $p$ value adjustments for multiple tests in multivariate binomial
  models.
\newblock \emph{Journal of the American Statistical Association}, 84\penalty0
  (407):\penalty0 780--786, 1989.
\newblock \doi{10.1080/01621459.1989.10478837}.

\end{thebibliography}
}

\end{document}